\documentclass[letterpaper, 10pt]{IEEEtran}
\IEEEoverridecommandlockouts 
%\overrideIEEEmargins
\usepackage{amsmath}
\usepackage{amsthm}
\usepackage{amsfonts}
\usepackage{amssymb}
\usepackage{hyperref}
\usepackage{algorithm, algpseudocode}  
\usepackage{array}                      
\usepackage{cite}
\usepackage{wrapfig}
\usepackage{lipsum}
\usepackage{graphicx}
\usepackage{mathtools}
\usepackage{comment}
\usepackage{xcolor}
\usepackage{mathrsfs}

\usepackage[compact]{titlesec}

\usepackage{blindtext}
\usepackage{tabularx,booktabs}
\newcolumntype{C}{>{\centering\arraybackslash}X} % centered version of "X" type

\usepackage [english]{babel}
\usepackage [autostyle, english = american]{csquotes}
\MakeOuterQuote{"}

\theoremstyle{definition}
\newtheorem{theorem}{Theorem}  
\newtheorem{assumption}{Assumption}

\newtheorem{remark}{Remark}

\usepackage{enumitem}

\graphicspath{{Figures/}}

\newcommand{\update}[1]{\textcolor{black}{#1}}
\newcommand{\alert}[1]{\textcolor{black}{#1}}

%%%%%%%%%%%%%%%%%%%%%%%%%%%%%%%%%%%%%%

\begin{document}
\pagenumbering{gobble}

\title{\update{Distributed State Estimation with Deep Neural Networks for Uncertain Nonlinear Systems under Event-Triggered Communication}}
	
\author{Federico M. Zegers, Runhan Sun, Girish Chowdhary, and Warren E. Dixon
\thanks{Federico M. Zegers, Runhan Sun, and Warren E. Dixon are with the Department of Mechanical
and Aerospace Engineering, University of Florida, Gainesville, Florida 32603, e-mail: fredzeg@ufl.edu, runhansun@ufl.edu, and wdixon@ufl.edu.}  
\thanks{Girish Chowdhary is with the Department of Agricultural and Biological Engineering, University of Illinois, Urbana-Champaign, IL 61801, e-mail: girishc@illinois.edu}
\thanks{This research is supported in part by A Task Order contract with the Air Force Research Laboratory, Munitions Directorate at Eglin AFB, NSF award 1509516, Office of Naval Research Grant N00014-13-1-0151, NEEC award number N00174-18-1-0003, AFOSR award numbers FA9550-18-1-0109 and FA9550-19-1-0169. Any opinions, findings and conclusions or recommendations expressed in this material are those of the author(s) and do not necessarily reflect the views of sponsoring agencies.}
\thanks{Manuscript submitted to \emph{IEEE Transactions on Automatic} Control}
}
	
\maketitle
		
\begin{abstract} 
Distributed state estimation is examined for a sensor network tasked with \update{reconstructing} a system's state through the use of a distributed and event-triggered observer. Each agent in the sensor network employs a deep neural network (DNN) to approximate the uncertain nonlinear dynamics of the system, which is trained using a multiple timescale approach. Specifically, the outer weights of each DNN are updated online using a Lyapunov-based gradient descent update law, while the inner weights and biases are trained offline using a supervised learning method and collected input-output data. The observer utilizes event-triggered communication to promote the efficient use of network resources. A nonsmooth Lyapunov analysis shows the distributed event-triggered observer \update{has a} uniformly ultimately bounded state reconstruction \update{error}. A simulation study is provided to validate the result and demonstrate the performance improvements afforded by the DNNs.

\end{abstract} 

\section{Introduction}
A \update{wireless sensor network (WSN)} is defined as a multi-agent system composed of autonomous sensors scattered over an area to monitor desired phenomena and connected through wireless communication links \cite{mostafaei2018energy}. By sharing partially observable state measurements of a system with their neighbors and leveraging a consensus algorithm, WSNs are capable of estimating the state of a system in a distributed fashion \cite{ge2019distributed}. This technique is called distributed state estimation, and it allows each sensor in the WSN to reconstruct the entire system state through local and cooperative information sharing despite each agent only being able to measure part of the system's state. Distributed state estimation does not require a data fusion center; therefore, it is a preferable state estimation strategy since it can better accommodate each agent's limited computing capacity, eliminate single points of failure, and promote scalability. 

In \cite{he2018consistent}, the authors developed a decentralized consensus-based observer capable of performing stable distributed state estimation using adaptive weights that were generated by solving a semi-definite program. Given that agents within a WSN may be powered by limited portable energy sources, results like \cite{battistelli2018distributed} and \cite{ge2018threshold} developed distributed state estimation strategies with event-triggered communication as a means to conserve energy and network resources. Similarly, the authors in \cite{yu2020event} developed a distributed state observer with stochastic event-triggered communication for a linear time-varying system, which improves upon \cite{he2018consistent,battistelli2018distributed,ge2018threshold}.

While results such as \cite{he2018consistent, battistelli2018distributed, ge2018threshold, yu2020event} provide valuable contributions towards the literature on distributed state estimation, these results, like many of the techniques in \cite{ge2019distributed}, are focused on known linear systems. Results on distributed state estimation for systems with uncertain and/or nonlinear dynamics are scarce but well motivated. Additionally, the improved computing power of modern processors along with data availability encourages the development of a distributed observer capable of employing machine learning techniques as a means to improve state reconstruction. However, the update laws used to train the weights and biases of deep neural networks (DNNs) do not typically have a stability analysis, which has mitigated their use for online estimation and control. Conversely, in this paper, we develop a Lyapunov-based update law for the outer weights of a DNN and prove the stability of the DNN-based function approximator for an uncertain nonlinear system.

Recently, the authors in~\cite{joshi2020design,joshi2019deep,Joshi.Virdi.ea.2020} developed a model reference adaptive control architecture that utilizes a DNN as the adaptive element while ensuring that the estimation error is uniformly ultimately bounded (UUB) via a Lyapunov-based stability analysis. These works are among the first to employ DNNs for real-time control while providing a formal stability assurance. The key innovation lies in the update of the DNN weights. The outer layer weights evolve according to a real-time analysis-based update law that ensures stability, and the inner layer weights are modified using batch updates. Based on this observation, the authors in \cite{Sun.Greene.eatoappear} developed a DNN adaptive controller for an uncertain nonlinear dynamical system capable of asymptotically tracking a desired trajectory. In this result, the outer layer weights of the DNN are updated in real-time using a Lyapunov-based update law, while the inner layer weights are updated using a data-driven supervised learning algorithm, i.e., the Levenberg-Marquardt algorithm. \update{The results in \cite{joshi2020design} and \cite{Sun.Greene.eatoappear} show that multiple timescale learning with DNNs can yield improved performance when compared to traditional adaptive techniques.} 

Results such as \cite{parisini1995receding,zhang2020near,karg2020efficient} provide alternative methods of employing DNNs for control and estimation. The work in \cite{parisini1995receding} developed a stabilizing regulator for a known discrete-time nonlinear system using a receding-horizon optimal control scheme. The proposed feedback control law is computed offline, where a DNN is used to approximate the receding-horizon regulator. In \cite{zhang2020near}, a model predictive control (MPC) policy for a known linear parameter-varying system is approximated using a DNN that is trained online with supervised learning. The authors in \cite{karg2020efficient} provide a DNN-based strategy to approximate an MPC control policy for a known linear system that reduces memory requirements when compared to other approximation techniques. While these methods provide computationally efficient control strategies, they are not readily applicable to uncertain nonlinear systems.

Inspired by \cite{yu2020event,joshi2020design}, and~\cite{Sun.Greene.eatoappear}, we develop an adaptive event-triggered distributed state observer that utilizes DNNs as a means to improve state reconstruction for an uncertain nonlinear system. Using a nonsmooth Lyapunov stability analysis, we prove that our observer is capable of UUB state reconstruction while being robust to a bounded exogenous disturbance. Similar to \cite{joshi2020design} and \cite{Sun.Greene.eatoappear}, we develop a multiple timescale learning strategy. In particular, the outer layer weights of each DNN are adjusted online using a Lyapunov-based update law that uses real-time feedback to ensure stability, while the inner layer weights and biases are updated offline using a supervised learning algorithm with collected input-output data. The theoretical findings of our work are validated through a simulation study. The \update{observer is} capable of reducing the root-mean-square state estimation error of each agent in the WSN by approximately $60\%$ when compared to an identical simulation, where the DNN inner weights and biases are held constant. 

The rest of the paper is organized as follows. Section \ref{Prelim. Section} introduces notation and necessary concepts about graphs. Section \ref{Section Dynamics and Network} introduces the system model and the sensor network. Section \ref{Objection Section} precisely formulates the goal of this work. Section \ref{Observer Section} develops the proposed observer and closed-loop dynamics of the error system encoding the state reconstruction error. Sections \ref{Stability Analysis} and \ref{Simulation Section} prove the UUB state reconstruction of the proposed observer and investigate the performance of the development, respectively. Section \ref{Concluding Section} summarizes this work and suggests possible future directions.

\section{Preliminaries} \label{Prelim. Section}
\subsection{Notation}
Let $\mathbb{R}$ and $\mathbb{Z}$ denote the set of real numbers and integers, respectively. We also write $\mathbb{R}_{\geq x}\triangleq\left[x,\infty\right),$
$\mathbb{R}_{>x}\triangleq\left(x,\infty\right),$ $\mathbb{R}_{<x}\triangleq\left(-\infty,x\right),$
$\mathbb{Z}_{\geq x}\triangleq\mathbb{R}_{\geq x}\cap\mathbb{Z},$
and $\mathbb{Z}_{>x}\triangleq\mathbb{R}_{>x}\cap\mathbb{Z}$ for
$x\in\mathbb{R}.$ For $p,q,n,u,v\in\mathbb{Z}_{>0}$, the $p\times q$
zero matrix and the $p\times 1$ zero column vector are denoted by
$0_{p\times q}$ and $0_{p},$ respectively. 
The $p\times p$ identity
matrix and the $p\times 1$ column vector of ones are denoted by $I_{p}$
and $1_{p},$ respectively. 
The Euclidean norm of a vector $r\in\mathbb{R}^{p}$
is denoted by $\left\Vert r\right\Vert \triangleq\sqrt{r^{\top}r}.$ Given a positive integer $M$, let $[M]\triangleq\left\{1,2,...,M\right\}$. The Kronecker product of $A\in\mathbb{R}^{p\times q}$ and $B\in\mathbb{R}^{u\times v}$
is denoted by $\left(A\otimes B\right)\in\mathbb{R}^{pu\times qv}$. The block diagonal matrix whose diagonal blocks consist of $G_{1},G_{2},...,G_{n}$ is denoted by $\text{diag}\left(G_{1},G_{2},...,G_{n}\right)$. The $i^{\text{th}}$, maximum, and minimum eigenvalues of a symmetric matrix $G\in\mathbb{R}^{p\times p}$ are denoted by $\lambda_{i}(G)\in\mathbb{R}$, $\lambda_{\max}(G)\in\mathbb{R}$, and $\lambda_{\min}(G)\in\mathbb{R}$, respectively. The trace of a square matrix $A\in\mathbb{R}^{p\times p}$ is denoted by $\text{tr}(A)$. Let $\text{vec}(\cdot)$ denote the vectorization transformation that converts a matrix into a column vector. The symbol $\mathcal{L}_{\infty}$ denotes the set of essentially bounded measurable functions, i.e., given the Lebesgue measurable function $f:\mathbb{R}\rightarrow\mathbb{R}$, $f\in\mathcal{L}_{\infty}$ if and only if $\inf\left\{ C\geq0:\left|f\left(x\right)\right|\leq C\text{ for almost every \ensuremath{x}\ensuremath{\in}\ensuremath{\mathbb{R}}}\right\} \in\mathbb{R}_{\geq0}$. The symbol $\circ$ denotes function composition, i.e., given suitable functions $f$ and $g$, $(f\circ g)(x)=f(g(x))$.

\subsection{Graphs}
Let $\mathcal{G}\triangleq(\mathcal{V},\mathcal{E},\mathcal{A})$ denote a static, weighted, and undirected graph with node set $\mathcal{V}\triangleq[N]$, for some $N\in\mathbb{Z}_{>0}$, edge set $\mathcal{E}\subseteq\mathcal{V}\times\mathcal{V}$, and symmetric weighted adjacency matrix $\mathcal{A}\triangleq [a_{ij}]\in\mathbb{R}^{N\times N}$. 
The edge $(i,j)\in\mathcal{E}$ if and only if node $i$ can send information to node $j$. 
Since $\mathcal{G}$ is undirected, $(i,j)\in\mathcal{E}$ if and only if $(j,i)\in\mathcal{E}$.  
An undirected graph is connected if and only if there exists a sequence of edges in $\mathcal{E}$ between any two distinct nodes. The neighbor set of node $i$ is denoted by $\mathcal{N}_i\triangleq\{j\in\mathcal{V}:(j,i)\in\mathcal{E}\}$. Within this work, no self-loops are considered, and therefore, $a_{ii}\triangleq0$ for all $i\in\mathcal{V}$. Moreover, $a_{ij}>0$ if $(j,i)\in\mathcal{E}$, and $a_{ij}=0$ if $(j,i)\notin\mathcal{E}$. The degree matrix of $\mathcal{G}$ is defined as a diagonal matrix such that $\Delta\triangleq[\Delta_{ij}]\in\mathbb{R}^{N\times N}$, where
$\Delta_{ij}\triangleq0$ for all $i\neq j$, and $\Delta_{ii}\triangleq\sum_{j\in\mathcal{V}} a_{ij}.$ The Laplacian matrix of $\mathcal{G}$ is denoted by $L\in\mathbb{R}^{N\times N}$ and defined as $L\triangleq\Delta-\mathcal{A}.$

\section{System Dynamics and Network Topology} \label{Section Dynamics and Network}
Consider a system whose uncertain model is given by
\begin{equation}
    \dot{x}_{0}(t)=f(x_{0}(t))+d(t),
\label{plant dynamics}
\end{equation}
where $x_0:[0,\infty)\to\mathbb{R}^{n}$ denotes the state, $f:\mathbb{R}^{n}\to\mathbb{R}^{n}$ denotes the uncertain nonlinear dynamics, and $d:[0,\infty)\to\mathbb{R}^{n}$ denotes an exogenous disturbance. Furthermore, consider a sensor network composed of $N$ agents, which are indexed by $\mathcal{V}$. For each $i\in\mathcal{V}$, agent $i$ is capable of continuously measuring the output $y_i:[0,\infty)\to\mathbb{R}^{m}$, where the output measurement is given by
\begin{equation}
    y_i(t)=C_{i}x_{0}(t)
\label{y_i}
\end{equation}
such that $C_i\in\mathbb{R}^{m\times n}$ denotes the known output matrix of agent $i$. The agents in the sensor network may have different sensing capabilities, where each agent may be able to measure a different component of the system's state. Each agent is also capable of intermittently communicating with its neighbors, where the flow of information between the agents in the sensor network is modeled through the static and undirected communication graph $\mathcal{G}=(\mathcal{V},\mathcal{E},\mathcal{A})$.\footnote{Future works can consider measurement and process perturbations as well as networked communication constraints such as delayed information and packet dropouts. However, we work in the nominal setting for simplicity and to highlight the novelty of the distributed state estimation strategy.} The following assumptions are used in the development of the result.
\begin{assumption}
The function $f$ is locally Lipschitz. 
\label{Assumption 1}
\end{assumption}
\begin{assumption}
The disturbance is continuous and bounded, i.e., there exists a $d_{\max}\in\mathbb{R}_{>0}$ such that $\Vert d(t) \Vert \leq d_{\max}$ for all $t\geq 0$.
\label{Assumption 2}
\end{assumption}
\begin{assumption}
The communication graph $\mathcal{G}$ is connected for all $t\geq 0$.
\label{Assumption 3}
\end{assumption}

\section{Objective} \label{Objection Section}
The objective is to develop a distributed observer capable of reconstructing the state of an uncertain nonlinear dynamical system. The distributed observer must also be event-triggered to promote the efficient use of network resources. Furthermore, we wish to design an observer that concurrently utilizes online and offline learning strategies to ensure stability and enable improved state reconstruction. To quantify the objective, let the state estimation error $e_{1,i}:[0,\infty)\to\mathbb{R}^{n}$ of agent $i\in\mathcal{V}$ be defined as
\begin{equation}
    e_{1,i}(t)\triangleq \hat{x}_{i}(t)-x_{0}(t),
\label{e1i}
\end{equation}
where $\hat{x}_i:[0,\infty)\to\mathbb{R}^{n}$ denotes the estimate of $x_{0}(t)$ as computed by agent $i$. The state estimation error $e_{1,i}(t)$ is an unmeasurable signal that is used only in the analysis. To facilitate the use of event-triggered control, let $\tilde{x}_i:[0,\infty)\to\mathbb{R}^{n}$ denote agent $i$'s sampled state estimate. For example, if agent $i$ uses a zero-order hold policy and samples its state estimate according to the increasing sequence $\{T_{k}\}_{k=0}^\infty$, then the sampled state estimate is given by $\tilde{x}_{i}(t)=\hat{x}_{i}(T_{k})$ for all $t\in[T_{k},T_{k+1})$ and each $k\in\mathbb{Z}_{\geq 0}$. The sampled state estimation error $e_{2,i}:[0,\infty)\to\mathbb{R}^{n}$ is defined as
\begin{equation}
    e_{2,i}(t)\triangleq \tilde{x}_{i}(t)-\hat{x}_{i}(t).
\label{e2i}
\end{equation}
The estimated output of the system with respect to agent $i$ is denoted by $\hat{y}_i:[0,\infty)\to\mathbb{R}^{m}$, where the output estimation error $e_{3,i}:[0,\infty)\to\mathbb{R}^{m}$ is defined as
\begin{equation}
    e_{3,i}(t)\triangleq \hat{y}_i(t)-y_i(t).
\label{e3i}
\end{equation}
Contrary to \eqref{e1i}, the sampled state estimation error $e_{2,i}(t)$ and the output estimation error $e_{3,i}(t)$ are both measurable by agent $i$, where $e_{3,i}(t)$ is used to drive the estimate of agent $i$ towards the state of the system. Given the state estimation error in \eqref{e1i}, the sensor network is said to have successfully reconstructed the state of the system whenever
\begin{equation}
    \underset{t\to\infty}{\text{lim sup}} \ \Vert e_{1,i}(t) \Vert \leq \varepsilon \quad \forall i\in\mathcal{V}
\end{equation}
for some user-defined $\varepsilon>0$. 

\section{Observer Development} \label{Observer Section}
Let $\{t_{k}^{i}\}_{k=0}^{\infty}$ be an increasing sequence of event-times for agent $i$, where $t_{k}^{i}$ denotes the $k^{\text{th}}$ instance agent $i$ samples and broadcasts its state estimate of the system in \eqref{plant dynamics} to all agents $j\in\mathcal{N}_{i}$. Note that the broadcast information is received by all neighbors simultaneously, i.e., we assume perfect communication. The sampled state estimate of the system as computed by agent $i$ is defined as\footnote{A zero-order hold, i.e., sampled state estimate, is used in this work since the system dynamics are unknown. Future works can consider observers that allow $\tilde{x}_{i}(t)$ to vary over $t\in[t_{k}^{i},t_{k+1}^{i})$.}
\begin{equation}
    \tilde{x}_{i}(t)\triangleq\hat{x}_{i}(t_{k}^{i}),\text{ }t\in[t_{k}^{i},t_{k+1}^{i})
\label{x_i_tilde}
\end{equation} 
for all $j\in\mathcal{N}_{i}\cup\{i\}$, \update{where the state estimate $\hat{x}_{i}(t)$ is generated by the subsequently defined observer.} Hence, all neighbors of agent $i$, including agent $i$, have access to the synchronized sampled state estimate from agent $i$.
\begin{assumption}
    The state of the system in \eqref{plant dynamics} evolves within a compact set $\mathcal{D}\subset\mathbb{R}^{n}$ for all time, i.e., $x_{0}(t)\in\mathcal{D}$ for all $t\geq 0$.
\label{Assumption 4}
\end{assumption}

\noindent Since the nonlinear function $f$ is continuous and $x_{0}(t)$ is contained within a compact set by Assumption \ref{Assumption 4}, we can invoke the Stone-Weierstrass Theorem to express the nonlinear dynamics in \eqref{plant dynamics} within $\mathcal{D}$ as
\begin{equation}
    f(x_{0}(t))=W_{0}^{\top}\sigma(\Phi(x_{0}(t)))+\varepsilon(x_{0}(t)),
\label{NN for drift dynamics}
\end{equation}
where $W_{0}\in\mathbb{R}^{L\times n}$ denotes the ideal outer layer weight matrix, $\sigma:\mathbb{R}^{p}\to\mathbb{R}^{L}$ denotes a vector containing bounded continuous activation functions\footnote{Examples of continuous activation functions are the sigmoid function, the hyperbolic tangent, and the Gaussian function.}, $\Phi:\mathbb{R}^{n}\to\mathbb{R}^{p}$ encodes the ideal inner DNN, and $\varepsilon:\mathbb{R}^{n}\to\mathbb{R}^{n}$ denotes the bounded function reconstruction error \update{\cite[Theorem 7.32]{rudin1976principles}}. Note that $W_{0}$, $\Phi$, and $\varepsilon$ are unknown. The ideal inner DNN can be expressed as
\begin{equation}
    \Phi(x_{0}(t))=(W_{\ell}^{\top}\phi_{\ell}\circ W_{\ell-1}^{\top}\phi_{\ell-1}\circ...\circ W_{1}^{\top}\phi_{1})(x_{0}(t)),
\label{Phi func}
\end{equation}
where $\text{\ensuremath{\ell}}\in\mathbb{Z}_{\geq 1}$ denotes the number of user-defined inner layers of the DNN, $q\in[\ell]$, $W_{q}\in\mathbb{R}^{L_{q}\times n_{q+1}}$ denotes the ideal weight matrix for the $q^{\text{th}}$ inner layer, and $\phi_{q}:\mathbb{R}^{n_{q}}\to\mathbb{R}^{L_{q}}$ denotes a vector function composed of scalar basis functions corresponding to the $q^{\text{th}}$ inner layer. Note that $n_{1}=n$ and $n_{\ell+1}=p$. Moreover, for each $q\in[\text{\ensuremath{\ell}}]$, $W_{q}$ is unknown. Using \eqref{NN for drift dynamics}, the system model in \eqref{plant dynamics} can be expressed as 
\begin{equation}
    \dot{x}_{0}(t) = W_{0}^{\top}\sigma(\Phi(x_{0}(t)))+\varepsilon(x_{0}(t))+d(t).
\label{Alternative Plant Dynamics}
\end{equation}
\update{Based on \eqref{Alternative Plant Dynamics} and the subsequent stability analysis}, the distributed observer of agent $i\in\mathcal{V}$ is defined as
\begin{equation}
\begin{aligned}
    \dot{\hat{x}}_{i}(t) \triangleq & \ 	\widehat{W}_{i}^{\top}(t)\sigma(\widehat{\Phi}_{i}(\hat{x}_{i}(t)))+K_{1}(z_{i}(t)-C_{i}^{\top}e_{3,i}(t)),\\
z_{i}(t)	\triangleq & \	\underset{j\in\mathcal{N}_{i}}{\sum}a_{ij}\left(\tilde{x}_{j}\left(t\right)-\tilde{x}_{i}\left(t\right)\right),\\
\hat{y}_{i}(t)	\triangleq & \ C_{i}\hat{x}_{i}(t),
\end{aligned}
\label{Observer i}
\end{equation}
where $\widehat{W}_{i}:[0,\infty)\to\mathbb{R}^{L\times n}$ denotes the estimated outer weight matrix of the system as computed by agent $i$, $\widehat{\Phi}_{i}:\mathbb{R}^{n}\to\mathbb{R}^{p}$ encodes the estimated inner DNN computed by agent $i$, and $K_{1}\in\mathbb{R}^{n\times n}$ is the \alert{symmetric} solution to the bilinear matrix inequality 
\begin{equation}
    \frac{1}{2}(I_{N}\otimes K_{1})C^{\top}C+\frac{1}{2}C^{\top}C(I_{N}\otimes K_{1})+(L\otimes K_{1})\geq k_{1}I_{nN}.
\label{BMI}
\end{equation}
Observe that $C \triangleq \text{diag}(C_{1},C_{2},...,C_{N})\in\mathbb{R}^{mN\times nN}$ denotes the output matrix of the sensor network, and $k_{1}\in\mathbb{R}_{> 0}$ is a user-defined parameter. The bilinear matrix inequality in \eqref{BMI} encodes an observability condition that originates from the subsequent stability analysis (see Section \ref{Stability Analysis}). The estimated inner DNN $\widehat{\Phi}_{i}(\hat{x}_{i}(t))$ is modeled as a piecewise continuous function \update{that is similar to \eqref{Phi func}}, where $T_{p}^{i}$ denotes the $p^{\text{th}}$ instance agent $i$ updates its DNN by training on collected input-output data. Hence, the set of discontinuities of $\widehat{\Phi}_{i}(\hat{x}_{i}(t))$ is given by $\{T_{p}^{i}\}_{p=1}^{\infty}$. 

\begin{remark}
Agent $i$ may collect input-output data from the system in \eqref{plant dynamics} and train a new inner DNN while the weights and biases of the previous inner DNN are held constant. Once training is complete, the new inner DNN can be switched in via \eqref{Observer i} \cite{Sun.Greene.eatoappear}.
\end{remark}

\noindent The error between the ideal outer weight matrix and the estimated outer weight matrix of agent $i$, i.e., $\widetilde{W}_{i}:[0,\infty)\to\mathbb{R}^{L \times n}$, is defined as
\begin{equation}
    \widetilde{W}_{i}(t) \triangleq W_{0}-\widehat{W}_{i}(t).
\label{Outer Weight Error}
\end{equation}
Since the ideal outer weights are unknown, $\widetilde{W}_{i}(t)$ is not measurable. Based on the subsequent Lyapunov stability analysis, the outer weight update law of agent $i$, which is embedded within the continuous projection operator denoted by $\text{proj}(\cdot,\cdot)$ and defined in \cite[Equation 4]{cai2006sufficiently}, is designed as\footnote{\alert{The projection operator is used to ensure $\widehat{W}_{i}(t)$ remains within the set $\Omega\triangleq\{w\in\mathbb{R}^{L\times n}:\Vert w \Vert \leq \bar{\omega}\}$ for all $t\geq 0$, where $\bar{\omega}\in\mathbb{R}_{>0}$ is a user-defined parameter.}} \alert{
\begin{equation}
\begin{aligned}
    \dot{\omega}_{i}(t) = & \text{ proj}(\mu_i,\omega_i),\\
    \mu_i \triangleq & -\text{vec}(\Gamma_{i}\sigma(\widehat{\Phi}_{i}(\hat{x}_{i}(t)))e_{3,i}^{\top}(t)C_{i}),\\
    \omega_i \triangleq & \text{ vec}(\widehat{W}_i(t)),
\end{aligned}
\label{What_i Dot}
\end{equation}
}where $\Gamma_{i}\in\mathbb{R}^{L\times L}$ is a user-defined positive definite matrix used to adjust the learning rate of the outer layer weights for the DNN of agent $i$. \update{Moreover, since the inner DNN is treated as an arbitrary piecewise continuous function, the user is free to employ virtually any offline training policy. An example policy is discussed in Section \ref{Simulation Section} and additional training strategies can be found in \cite{joshi2019deep} and \cite{Joshi.Virdi.ea.2020}.}

Using \eqref{y_i}, \eqref{e1i}, and the definition of $\hat{y}_{i}(t)$ in \eqref{Observer i}, $e_{3,i}(t)$ can be alternatively expressed as
\begin{equation}
    e_{3,i}(t) = C_{i}e_{1,i}(t).
\label{Alternative e3i}
\end{equation}
Substituting \eqref{e1i} and \eqref{e2i} into the definition of $z_{i}(t)$ in \eqref{Observer i} yields
\begin{equation}
    z_{i}(t) = \underset{j\in\mathcal{N}_{i}}{\sum}a_{ij}(e_{1,j}(t)-e_{1,i}(t))+\underset{j\in\mathcal{N}_{i}}{\sum}a_{ij}(e_{2,j}(t)-e_{2,i}(t)).
\label{Alternative z_i}
\end{equation}

\noindent The expression in \eqref{Alternative z_i} is not measurable since it contains state estimation errors for agents $j\in\mathcal{N}_{i}\cup\{i\}$. However, the expression for $z_{i}(t)$ in \eqref{Observer i} is measurable and equivalent to \eqref{Alternative z_i}, where \eqref{Alternative z_i} is used in the analysis. The closed-loop error dynamics of $e_{1,i}(t)$ can now be determined by substituting \eqref{Alternative Plant Dynamics}--\eqref{Outer Weight Error}, \eqref{Alternative e3i}, and \eqref{Alternative z_i} into the time derivative of \eqref{e1i}, when it exists, while adding and subtracting $W_{0}^{\top}\sigma(\widehat{\Phi}_{i}(\hat{x}_{i}(t)))$ to obtain
\begin{equation}
\begin{aligned}
    \dot{e}_{1,i}(t) & = -\widetilde{W}_{i}^{\top}(t)\sigma(\widehat{\Phi}_{i}(\hat{x}_{i}(t))) -K_{1}C_{i}^{\top}C_{i}e_{1,i}(t)+\chi_{i}(t) \\
    & +K_{1}\underset{j\in\mathcal{N}_{i}}{\sum}a_{ij}(e_{1,j}(t)-e_{1,i}(t))\\
    & +K_{1}\underset{j\in\mathcal{N}_{i}}{\sum}a_{ij}(e_{2,j}(t)-e_{2,i}(t)),
\end{aligned}
\label{e1i Dot}
\end{equation}
where $\chi_{i}(t)\triangleq W_{0}^{\top}(\sigma(\widehat{\Phi}_{i}(\hat{x}_{i}(t)))-\sigma(\Phi(x_{0}(t))))-\varepsilon(x_{0}(t))-d(t)\in\mathbb{R}^{n}$. 

To express the subsequent development in a compact form, let $e_{1}(t) \triangleq [e_{1,1}^{\top}(t),e_{1,2}^{\top}(t),...,e_{1,N}^{\top}(t)]^{\top}\in\mathbb{R}^{nN}$, $e_{2}(t) \triangleq$ \\ $[e_{2,1}^{\top}(t),e_{2,2}^{\top}(t),...,e_{2,N}^{\top}(t)]^{\top}\in\mathbb{R}^{nN}$, $e_{3}(t)\triangleq [e_{3,1}^{\top}(t),e_{3,2}^{\top}(t)$ \\ $,...,e_{3,N}^{\top}(t)]^{\top}\in\mathbb{R}^{mN}$, $\hat{x}(t) \triangleq [\hat{x}_{1}^{\top}(t),\hat{x}_{2}^{\top}(t),...,\hat{x}_{N}^{\top}(t)]^{\top}\in\mathbb{R}^{nN}$, $\chi(t)  \triangleq [\chi_{1}^{\top}(t),\chi_{2}^{\top}(t),...,\chi_{N}^{\top}(t)]^{\top}\in\mathbb{R}^{nN}$, and $z(t) \triangleq [z_{1}^{\top}(t),z_{2}^{\top}(t),...,z_{N}^{\top}(t)]^{\top}\in\mathbb{R}^{nN}$. The block diagonal matrix composed from the DNN outer weight errors is $\widetilde{W}(t) \triangleq \text{diag}(\widetilde{W}_{1}(t),\widetilde{W}_{2}(t),...,\widetilde{W}_{N}(t))\in\mathbb{R}^{LN\times nN}$. Similarly, the block diagonal matrix consisting of the DNN outer weight estimates is $\widehat{W}(t) \triangleq \text{diag}(\widehat{W}_{1}(t),\widehat{W}_{2}(t),...,\widehat{W}_{N}(t))\in\mathbb{R}^{LN\times nN}$. Let $\sigma_{i}(t) \triangleq \sigma(\widehat{\Phi}_{i}(\hat{x}_{i}(t)))$ for every agent $i\in\mathcal{V}$. The block diagonal matrices consisting of $\{\Gamma_{i}\}_{i\in\mathcal{V}}$ and the DNN components of all agents prior to being multiplied by their corresponding outer weights are denoted by $\Gamma \triangleq \text{diag}(\Gamma_{1},\Gamma_{2},...,\Gamma_{N})\in\mathbb{R}^{LN\times LN}$ and $\sigma(\widehat{\Phi}(\hat{x}(t))) \triangleq[\sigma_{1}^{\top}(t),\sigma_{2}^{\top}(t),...,\sigma_{N}^{\top}(t)]^{\top}\in\mathbb{R}^{LN}$, respectively. 

Using \eqref{e1i Dot} and the stacked expressions for $e_{1}(t)$, $e_{2}(t)$, $\chi(t)$, $C$, $\widetilde{W}(t)$, $\widehat{W}(t)$, and $\sigma(\widehat{\Phi}(\hat{x}(t)))$, the closed-loop dynamics of $e_{1}(t)$ are
\begin{equation}
\begin{aligned}
    \dot{e}_{1}(t) & = -\widetilde{W}^{\top}(t)\sigma(\widehat{\Phi}(\hat{x}(t))) -(I_{N}\otimes K_{1})C^{\top}Ce_{1}(t)\\
    & -(L\otimes K_{1})e_{1}(t) -(L\otimes K_{1})e_{2}(t) + \chi(t).
\end{aligned} 
\label{e1 Dot}
\end{equation}
Since $W_{0}$ is a fixed matrix, $\sigma(\cdot)$ is a bounded function, the function reconstruction error $\varepsilon$ is bounded, and the disturbance is bounded given Assumption \ref{Assumption  2}, there exists a constant $\chi_{\max}\in\mathbb{R}_{>0}$ such that $\Vert \chi(t) \Vert \leq \chi_{\max}$ for all $t\geq 0$.  
Moreover, from \eqref{Alternative z_i}, it follows that $z(t)$ can be expressed as
\begin{equation}
    z\left(t\right) = -(L\otimes I_{n})e_{1}(t)-(L\otimes I_{n})e_{2}(t).
\label{z Ensemble}
\end{equation}
Using \eqref{z Ensemble} and Young's inequality, it follows that
\begin{equation}
    -\Vert e_{1}(t)\Vert ^{2} \leq \Vert e_{2}(t)\Vert ^{2} - \frac{1}{2 \Vert L\otimes I_{n} \Vert ^{2}}\Vert z(t)\Vert ^{2},
\label{Trigger Inequality}
\end{equation}
which is a useful inequality employed in the development of the event-trigger mechanism for the sensor network. 

\section{Stability Analysis} \label{Stability Analysis}
The following objects are presented to facilitate the development. Let $\bar{\chi}(t) \triangleq \chi(t)-(I_{nN}-C^{\top}C)\widetilde{W}^{\top}(t)\sigma(\widehat{\Phi}(\hat{x}(t)))$. Observe that $C$ is bounded by construction. Similarly, $\widetilde{W}(t)$ is bounded since $W_{0}$ is fixed and the projection operator ensures $\widehat{W}_{i}(t)$ is bounded for each $i\in\mathcal{V}$. Moreover, $\sigma(\cdot)$ is bounded by design. Hence, there exists a $\bar{\chi}_{\max}\in\mathbb{R}_{>0}$ such that $\Vert \bar{\chi}(t) \Vert \leq \bar{\chi}_{\max}$ for all $t\geq 0$. Furthermore, there exists a constant $\widetilde{W}_{\max}\in\mathbb{R}_{>0}$ such that $\frac{1}{2}\text{tr}\left(\widetilde{W}^{\top}(t)\Gamma^{-1}\widetilde{W}(t)\right)\leq\widetilde{W}_{\max}$, which can be made arbitrarily small through the choice of $\Gamma$. Let $k_{1}\triangleq k_{2}+\frac{\rho^{2}}{\delta}$, $\alpha \triangleq k_{2}-\frac{1}{\kappa}$, and $\bar{\delta}\triangleq\delta+\epsilon$. Select $\kappa>0$, $k_{2}>\frac{1}{\kappa}$, $\delta > 0$, $\rho\geq\bar{\chi}_{\max}$, and $\epsilon > 0$. Hence, $k_{1}>0$, $\bar{\delta}>0$, and $\alpha>0$.

The event-times $\{t_{k}^{i}\}_{k=0}^{\infty}$ that dictate when agent $i$ samples and broadcasts its state estimate $\hat{x}_{i}(t)$, as outlined in \eqref{x_i_tilde}, are generated by the event-trigger mechanism
\begin{equation}
\begin{aligned}
t_{k+1}^{i}\triangleq & \inf\left\{ t>t_{k}^{i}: \phi_{1}\Vert e_{2,i}(t)\Vert ^{2}\geq \phi_{2}\Vert z_{i}(t)\Vert ^{2}+\frac{\epsilon}{N}\right\},\\
\phi_{1} \triangleq & \frac{k_{2}}{2}+\frac{\kappa}{2}\Vert L\otimes K_{1}\Vert ^{2}, \quad \phi_{2} \triangleq \frac{k_{2}}{4\Vert L\otimes I_{n}\Vert ^{2}}.
\end{aligned}
\label{Trigger}
\end{equation}
\noindent Since $k_{2}$ and $\kappa$ are positive, $\phi_{1}>0$ and $\phi_{2}>0$.\footnote{The piecewise continuity of $e_{2,i}(t)$, $\epsilon>0$, and \eqref{Trigger} can be used to show that, after each event-time of agent $i$, there exists a well-defined time interval over which agent $i$ does not trigger.} Moreover, the event-trigger mechanism in \eqref{Trigger} originates from the subsequent stability analysis. \update{Notice that \eqref{Trigger} requires each agent to sample their state estimate whenever the error between the sampled and continuous estimates becomes sufficiently large.}

\begin{theorem} \label{Theorem 1}
The observer in \eqref{Observer i} and update law in \eqref{What_i Dot} for each $i\in\mathcal{V}$ ensure the state estimation error $e_{1}(t)$ is UUB in the sense that 
\begin{equation}
\begin{aligned}
    \Vert e_{1}(t)\Vert^{2} &\leq \left(\Vert e_{1}(0)\Vert ^{2}+\text{tr}(\widetilde{W}^{\top}(0)\Gamma^{-1}\widetilde{W}(0))\right)e^{-\alpha t}\\
    & +2\left(\widetilde{W}_{\max}+\frac{\bar{\delta}}{\alpha}\right)(1-e^{-\alpha t})
\end{aligned}
\label{UUB Bound}
\end{equation}
provided Assumptions \ref{Assumption 1}--\ref{Assumption 4} are satisfied, there exists a matrix $K_{1}$ satisfying the bilinear matrix inequality in \eqref{BMI}, and agent $i$ broadcasts its state estimate as determined by the event-trigger mechanism in \eqref{Trigger} for each $i\in\mathcal{V}$.
\end{theorem}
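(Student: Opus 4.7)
The plan is to analyze stability through the composite Lyapunov function
\[
V(t) \triangleq \tfrac{1}{2}\|e_{1}(t)\|^{2} + \tfrac{1}{2}\text{tr}\bigl(\widetilde{W}^{\top}(t)\Gamma^{-1}\widetilde{W}(t)\bigr),
\]
which combines the state reconstruction error energy with the outer-weight learning error energy. Since the inner DNN $\widehat{\Phi}_{i}$ is merely piecewise continuous in time (switching at $\{T_{p}^{i}\}$) while $\hat{x}_{i}$, $\widehat{W}_{i}$, and hence $e_{1}$ remain continuous across those switches, $V$ is continuous and differentiable almost everywhere. I would derive the differential inequality on $\dot{V}$ on each inter-switching, inter-event subinterval and then invoke the Comparison Lemma on $[0,t]$.

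First I would differentiate $V$ along \eqref{e1 Dot} and the update law \eqref{What_i Dot}. The $\dot{e}_{1}$ contribution produces $-e_{1}^{\top}\widetilde{W}^{\top}\sigma(\widehat{\Phi}(\hat{x})) - e_{1}^{\top}(I_{N}\otimes K_{1})C^{\top}Ce_{1} - e_{1}^{\top}(L\otimes K_{1})e_{1} - e_{1}^{\top}(L\otimes K_{1})e_{2} + e_{1}^{\top}\chi$. Invoking the projection-operator inequality from \cite{cai2006sufficiently}, the derivative of the trace term is upper bounded by the expression one would obtain without projection, namely $\text{tr}(\widetilde{W}^{\top}\sigma(\widehat{\Phi}(\hat{x}))e_{3}^{\top}C) = e_{1}^{\top}C^{\top}C\widetilde{W}^{\top}\sigma(\widehat{\Phi}(\hat{x}))$, where I used $e_{3}=Ce_{1}$ together with the block-diagonal structure of $C$ and $\widetilde{W}$. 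Summed with the $\dot{e}_{1}$ contribution, this cancels precisely the component of $-e_{1}^{\top}\widetilde{W}^{\top}\sigma(\widehat{\Phi}(\hat{x}))$ that lives in the range of $C^{\top}C$, leaving a residual equal to $e_{1}^{\top}\bar{\chi}$ in the notation introduced immediately before the theorem.

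Next I would apply the bilinear matrix inequality \eqref{BMI}, after symmetrization of $(I_{N}\otimes K_{1})C^{\top}C$, to collapse the quadratic terms in $e_{1}$ to $-e_{1}^{\top}[(I_{N}\otimes K_{1})C^{\top}C + (L\otimes K_{1})]e_{1} \leq -k_{1}\|e_{1}\|^{2}$. Young's inequality on the cross term $-e_{1}^{\top}(L\otimes K_{1})e_{2}$ with parameter $\kappa$, together with Young's inequality on $e_{1}^{\top}\bar{\chi}$ using $\|\bar{\chi}\|\leq\rho$ and the choice $k_{1}=k_{2}+\rho^{2}/\delta$, collapses the bound to $\dot{V} \leq -k_{2}\|e_{1}\|^{2} + \tfrac{1}{2\kappa}\|e_{1}\|^{2} + \tfrac{\kappa}{2}\|L\otimes K_{1}\|^{2}\|e_{2}\|^{2} + \tfrac{\delta}{4}$. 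Splitting $-k_{2}\|e_{1}\|^{2}$ in half and applying \eqref{Trigger Inequality} to one of the halves rearranges the right-hand side into $-\tfrac{\alpha}{2}\|e_{1}\|^{2} + \sum_{i\in\mathcal{V}}[\phi_{1}\|e_{2,i}\|^{2} - \phi_{2}\|z_{i}\|^{2}] + \tfrac{\delta}{4}$. The event-trigger \eqref{Trigger} enforces $\phi_{1}\|e_{2,i}\|^{2} - \phi_{2}\|z_{i}\|^{2} < \epsilon/N$ between events for every $i\in\mathcal{V}$, so the agent-wise sum is bounded by $\epsilon$, yielding $\dot{V} \leq -\tfrac{\alpha}{2}\|e_{1}\|^{2} + \bar{\delta}$. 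Lower-bounding $\|e_{1}\|^{2} \geq 2V - 2\widetilde{W}_{\max}$ through the trace bound lifts this to $\dot{V} \leq -\alpha V + \alpha\widetilde{W}_{\max} + \bar{\delta}$, which the Comparison Lemma integrates to yield \eqref{UUB Bound} after multiplication by $2$ and the use of $\|e_{1}(t)\|^{2} \leq 2V(t)$.

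The main obstacle will be maintaining rigor across three concurrent sources of nonsmoothness: the switching of the inner DNN at the countable set $\{T_{p}^{i}\}$, the zero-order-hold discontinuities of $\tilde{x}_{i}$ at the event-times $\{t_{k}^{i}\}$, and the set-valued projection embedded in \eqref{What_i Dot}. Verifying continuity of $e_{1}$, $\widehat{W}_{i}$, and hence $V$ across all three families of switching instants is essential so that the per-interval differential inequality integrates uninterrupted on $[0,t]$. A secondary subtlety is exhibiting a strictly positive inter-event dwell time from \eqref{Trigger} to rule out Zeno behavior, which is implicit in the claim that \eqref{UUB Bound} holds for all $t\geq 0$.
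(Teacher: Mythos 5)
Your proposal is correct and follows essentially the same route as the paper's proof: the same composite Lyapunov function, the same projection-operator bound on the trace term and its cancellation against the range-of-$C^{\top}C$ component of $-e_{1}^{\top}\widetilde{W}^{\top}\sigma(\widehat{\Phi}(\hat{x}))$ leaving the residual $e_{1}^{\top}\bar{\chi}$, the same use of the bilinear matrix inequality, Young's inequalities, the splitting of $-k_{2}\Vert e_{1}\Vert^{2}$ with \eqref{Trigger Inequality}, and the comparison-lemma integration. The only difference is presentational: the paper formalizes the nonsmoothness you flag via Filippov solutions and the Clarke generalized gradient (noting the discontinuity set is countable) rather than a per-interval argument, and it defers the positive inter-event dwell time to a separate theorem.
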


\begin{proof}
Consider the Lyapunov function candidate $V:\mathcal{D}\to\mathbb{R}_{\geq 0}$ defined as
\begin{equation}
    V(\zeta(t)) \triangleq \frac{1}{2}e_{1}^{\top}(t)e_{1}(t) + \frac{1}{2}\text{tr}(\widetilde{W}^{\top}(t)\Gamma^{-1}\widetilde{W}(t)),
\label{V Lyap}
\end{equation}
where $\zeta(t) \triangleq [e_{1}^{\top}(t),\text{vec}(\widetilde{W}(t))^{\top}]^{\top}\in\mathbb{R}^{nN+nLN^2}$. Observe that $V(\zeta(t))$ can be bounded as
\begin{equation}
    \frac{1}{2}e_{1}^{\top}(t)e_{1}(t) \leq V(\zeta(t)) \leq \frac{1}{2}e_{1}^{\top}(t)e_{1}(t)+\widetilde{W}_{\max}.
\label{V Bound}
\end{equation}

\noindent Suppose $\xi:\left[0,\infty\right)\rightarrow\mathbb{R}^{nN+nLN^2}$ is a Filippov solution to the differential inclusion $\dot{\xi}(t)\in K[\mathcal{H}](\xi(t))$, where $\xi(t)=\zeta(t)$, the mapping $K[\cdot]$ provides a calculus for computing Filippov's differential inclusion as defined in \cite{paden1987calculus}, and $\mathcal{H}:\mathbb{R}^{nN+nLN^2}\rightarrow\mathbb{R}^{nN+nLN^2}$ is defined as $\mathcal{H}(\xi(t))=[\dot{e}_{1}^{\top}(t),\text{vec}(\dot{\widetilde{W}}(t))^{\top}]^{\top}.$ The time derivative of $V(\zeta(t))$ exists almost everywhere (a.e.) and
\begin{equation}
    \dot{V}(\xi(t)) \overset{a.e.}{\in}\dot{\widetilde{V}}(\xi(t)),
\label{AE Condition}
\end{equation}
where $\dot{\widetilde{V}}(\xi(t))$ is the generalized time derivative of $V(\zeta(t))$ along the Filippov trajectories of $\dot{\xi}(t)=\mathcal{H}(\xi(t))$. By \cite[Equation 13]{shevitz1994lyapunov},
\begin{equation}
    \dot{\widetilde{V}}(\xi(t)) \triangleq \underset{\eta\in\partial V(\xi(t))}{\bigcap}\eta^{\top}\left[K[\mathcal{H}]^{\top}(\xi(t)), \ 1\right]^{\top},
\end{equation}
where $\partial V(\xi(t))$ denotes the Clarke generalized gradient of $V(\xi(t))$. Since $V(\xi(t))$ is continuously differentiable in $\xi(t)$, $\partial V(\xi(t))=\{ \nabla V(\xi(t))\}$, where $\nabla$ denotes the gradient operator. Using the calculus of $K[\cdot]$ from \cite{paden1987calculus} and simplifying the substitution of \eqref{e1 Dot} into the generalized time derivative of \eqref{V Lyap}, one has
\begin{equation}
\begin{aligned}
    \dot{\widetilde{V}}(\xi(t)) \subseteq & -e_{1}^{\top}(t)\widetilde{W}^{\top}(t)K\left[\sigma(\widehat{\Phi}(\hat{x}(t)))\right]+ e_{1}^{\top}(t)K[\chi(t)]\\
    - & \left\{e_{1}^{\top}(t)(L\otimes K_{1})e_{1}(t)\right\}-e_{1}^{\top}(t)(L\otimes K_{1})K[e_{2}(t)]\\
    - & \left\{e_{1}^{\top}(t)(I_{N}\otimes K_{1})C^{\top}Ce_{1}(t)\right\}\\
    - & \text{tr}\left(\widetilde{W}^{\top}(t)\Gamma^{-1}K\left[\dot{\widehat{W}}(t))\right]\right). 
\end{aligned}
\label{V Generalized Dot}
\end{equation}
Using the estimated outer weight update law in \eqref{What_i Dot} for each $i\in\mathcal{V}$ and the stacked expressions for $\widetilde{W}(t)$, $\widehat{W}(t)$, $e_{3}(t)$, $C$, $\Gamma$, and $\sigma(\widehat{\Phi}(\hat{x}(t)))$, the time derivative of $\frac{1}{2}\text{tr}(\widetilde{W}^{\top}(t)\Gamma^{-1}\widetilde{W}(t))$ yields
\alert{
\begin{equation}
\begin{aligned}
    \text{tr}(\widetilde{W}^{\top}(t)\Gamma^{-1}\dot{\widehat{W}}(t))  
    & = \sum_{i\in\mathcal{V}}\text{tr}(\widetilde{W}_i^{\top}(t)\Gamma_i^{-1}\dot{\widehat{W}}_i(t))\\
    & = \sum_{i\in\mathcal{V}} \text{vec}(\Gamma_i^{-1}\widetilde{W}_i(t))^{\top}\text{proj}(\mu_i,\omega_i)\\
    & \geq -e_{3}^{\top}(t)C\widetilde{W}^{\top}(t)\sigma(\widehat{\Phi}(\hat{x}(t))).
\end{aligned}
\label{Trace Term}
\end{equation}
}Substituting \eqref{Alternative e3i} for all $i\in\mathcal{V}$ into $e_{3}(t)$ yields $e_{3}(t) = Ce_{1}(t)$. Adding and subtracting $C^{\top}C$ while using $e_{3}(t) = Ce_{1}(t)$ results in
\begin{equation}
\begin{aligned}
    e_{1}^{\top}(t)\widetilde{W}^{\top}(t)\sigma(\widehat{\Phi}(\hat{x}(t))) = & \   e_{3}^{\top}(t)C\widetilde{W}^{\top}(t)\sigma(\widehat{\Phi}(\hat{x}(t)))\\
    & + \ e_{1}^{\top}(t)(I_{nN}-C^{\top}C)\widetilde{W}^{\top}(t)\\
    & \cdot \sigma(\widehat{\Phi}(\hat{x}(t))).
\end{aligned}
\label{Auxiliary Term}
\end{equation}
Substituting \eqref{Trace Term} and \eqref{Auxiliary Term} into \eqref{V Generalized Dot} and utilizing \eqref{AE Condition}, it follows that
\begin{equation}
\begin{aligned}
    \dot{V}(\zeta(t)) \update{\overset{a.e.}{\leq}} & -e_{3}^{\top}(t)C\widetilde{W}^{\top}(t)\sigma(\widehat{\Phi}(\hat{x}(t)))\\
    & - e_{1}^{\top}(t)(I_{nN}-C^{\top}C)\widetilde{W}^{\top}(t)\sigma(\widehat{\Phi}(\hat{x}(t)))\\
    & -e_{1}^{\top}(t)(L\otimes K_{1})e_{1}(t)-e_{1}^{\top}(t)(L\otimes K_{1})e_{2}(t)\\
    & -e_{1}^{\top}(t)(I_{N}\otimes K_{1})C^{\top}Ce_{1}(t)+e_{1}^{\top}(t)\chi(t)\\
    & +e_{3}^{\top}(t)C\widetilde{W}^{\top}(t)\sigma(\widehat{\Phi}(\hat{x}(t)))\\
    \update{\overset{a.e.}{\leq}} & -e_{1}^{\top}(t)(L\otimes K_{1})e_{1}(t)-e_{1}^{\top}(t)(L\otimes K_{1})e_{2}(t)\\
    & -e_{1}^{\top}(t)(I_{N}\otimes K_{1})C^{\top}Ce_{1}(t)+e_{1}^{\top}(t)\tilde{\chi}(t).
\end{aligned}
\label{V Dot One}
\end{equation}
Using Young's inequality, \eqref{V Dot One} can be upper bounded as
\begin{equation}
\begin{aligned}
    \dot{V}(\zeta(t)) \overset{a.e.}{\leq} & -e_{1}^{\top}(t)\bigg(\frac{1}{2}(I_{N}\otimes K_{1})C^{\top}C+\frac{1}{2}C^{\top}C(I_{N}\otimes K_{1})\\
    & +(L\otimes K_{1})\bigg)e_{1}(t) +\frac{\kappa}{2}\Vert L\otimes K_{1}\Vert ^{2}\Vert e_{2}(t)\Vert ^{2}\\
    & +\frac{1}{2\kappa}\Vert e_{1}(t)\Vert ^{2} +\bar{\chi}_{\max}\Vert e_{1}(t)\Vert. 
\end{aligned}
\label{V Dot Two}
\end{equation}
Using the bilinear matrix inequality in \eqref{BMI} and $k_{1} = k_{2}+\frac{\rho^{2}}{\delta}$, \eqref{V Dot Two} can be upper bounded by
\begin{equation}
\begin{aligned}    
    \dot{V}(\zeta(t)) \overset{a.e.}{\leq} & -\left(k_{2}-\frac{1}{2\kappa}\right)\Vert e_{1}(t)\Vert ^{2}+\frac{\kappa}{2}\Vert L\otimes K_{1}\Vert ^{2}\Vert e_{2}(t)\Vert ^{2}\\
    & +\bar{\chi}_{\max}\Vert e_{1}(t)\Vert -\frac{\rho^{2}}{\delta}\Vert e_{1}(t)\Vert ^{2}.
\end{aligned} 
\label{V Dot Three}
\end{equation}
Since $\rho\geq\bar{\chi}_{\max}$, it follows that $\bar{\chi}_{\max}\Vert e_{1}(t)\Vert -\frac{\rho^{2}}{\delta}\Vert e_{1}(t)\Vert ^{2}\leq\delta$. Using this inequality and \eqref{Trigger Inequality}, \eqref{V Dot Three} can be upper bounded as
\begin{equation}
\begin{aligned}
    \dot{V}(\zeta(t)) \overset{a.e.}{\leq} &
    -\frac{1}{2}\left(k_{2}-\frac{1}{\kappa}\right)\Vert e_{1}(t)\Vert ^{2} + \bar{\delta}\\
    & + \underset{i\in\mathcal{V}}{\sum} \left(\phi_{1}\Vert e_{2,i}(t)\Vert ^{2} - \phi_{2}\Vert z_{i}(t)\Vert ^{2}-\frac{\epsilon}{N}\right).
\end{aligned}
\label{V Dot Four}
\end{equation}
Based on \eqref{V Dot Four}, the event-trigger mechanism for each agent $i\in\mathcal{V}$ is given by \eqref{Trigger}. Since each agent provides state feedback according to the event-trigger mechanism in \eqref{Trigger}, \eqref{V Dot Four} can be upper bounded as
\begin{equation}
    \dot{V}(\zeta(t)) \overset{a.e.}{\leq} -\frac{1}{2}\alpha\Vert e_{1}(t)\Vert ^{2} + \bar{\delta},
\label{V Dot Five}
\end{equation}
where $\alpha = k_{2}-\frac{1}{\kappa}$. Using \eqref{V Bound}, \eqref{V Dot Five} can be upper bounded as 
\begin{equation}
    \dot{V}(\zeta(t)) \overset{a.e.}{\leq} -\alpha V(\zeta(t)) + \alpha\widetilde{W}_{\max}+\bar{\delta}.
\label{V Dot Six}
\end{equation}
Note that $V(\zeta(t))$ is continuous over $\mathbb{R}_{>0}$, and $\dot{V}(\zeta(t))$ is continuous almost everywhere in $\mathbb{R}_{>0}$. The discontinuities of $\dot{V}(\zeta(t))$ occur over the set $\bigcup_{k,p\in\mathbb{Z}_{\geq 0}} \bigcup_{i\in\mathcal{V}} \{ t_{k}^{i}\}\cup\{T_{p}^{i}\}$, which is countable. Integrating \eqref{V Dot Six} yields
\begin{equation}
    V(\zeta(t)) \leq V(\zeta(0))e^{-\alpha t}+\left(\widetilde{W}_{\max}+\frac{\bar{\delta}}{\alpha}\right)(1-e^{-\alpha t}).
\label{V UUB Bound}
\end{equation}
Using \eqref{V Bound} and \eqref{V UUB Bound}, the result in \eqref{UUB Bound} follows.

We now show the constituent signals used in the observer are bounded. By Assumption \ref{Assumption 4}, $x_{0}(t)\in\mathcal{L}_{\infty}$. Since $e_{1}(t)\in\mathcal{L}_{\infty}$ given \eqref{UUB Bound}, the definition of $e_{1}(t)$ implies $e_{1,i}(t)\in\mathcal{L}_{\infty}$ for each $i\in\mathcal{V}$. Since $x_{0}(t)\in\mathcal{L}_{\infty}$ and $e_{1,i}(t)\in\mathcal{L}_{\infty}$ for each $i\in\mathcal{V}$, \eqref{e1i} implies $\hat{x}_{i}(t)\in\mathcal{L}_{\infty}$ for each $i\in\mathcal{V}$. Since $\hat{x}_{i}(t)\in\mathcal{L}_{\infty}$, \eqref{x_i_tilde} implies $\tilde{x}_{i}(t)\in\mathcal{L}_{\infty}$. Since $\tilde{x}_{i}(t)\in\mathcal{L}_{\infty}$ for each $i\in\mathcal{V}$, \eqref{Observer i} implies $z_{i}(t)\in\mathcal{L}_{\infty}$. Since $\hat{x}_{i}(t)\in\mathcal{L}_{\infty}$ and $C_{i}$ is a fixed matrix, \eqref{Observer i} implies $\hat{y}_{i}(t)\in\mathcal{L}_{\infty}$. Since $x_{0}(t)\in\mathcal{L}_{\infty}$ and $C_{i}$ is a fixed matrix, \eqref{y_i} implies $y_{i}(t)\in\mathcal{L}_{\infty}$. Since $\hat{y}_{i}(t)\in\mathcal{L}_{\infty}$ and $y_{i}(t)\in\mathcal{L}_{\infty}$, \eqref{e3i} implies $e_{3,i}(t)\in\mathcal{L}_{\infty}$. Lastly, $\widehat{W}_{i}(t)\in\mathcal{L}_{\infty}$ by construction.
\end{proof}

\begin{theorem}
The difference between consecutive broadcast times generated by the event-trigger mechanism of agent $i\in\mathcal{V}$ in \eqref{Trigger} is uniformly lower bounded by \begin{equation}
    t_{k+1}^{i}-t_{k}^{i}	\geq \frac{1}{e_{2,i}^{\max}}\sqrt{\frac{\epsilon}{N\phi_{1}}}
\label{Zeno Bound}
\end{equation} 
for all instances $k\in\mathbb{Z}_{\geq0}$, where $e_{2,i}^{\max}\in\mathbb{R}_{>0}$ is a bounding constant such that $\Vert \widehat{W}_{i}(t)\Vert \Vert \sigma(\widehat{\Phi}_{i}(\hat{x}_{i}(t)))\Vert +\Vert K_{1}\Vert \Vert z_{i}(t)\Vert +\Vert K_{1}C_{i}^{\top}\Vert \Vert e_{3,i}(t)\Vert \leq e_{2,i}^{\max}$ and $\phi_1$ is a constant defined in \eqref{Trigger}.
\end{theorem}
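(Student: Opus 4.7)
The plan is to show that between consecutive triggers of agent $i$ the sampled-estimate error $e_{2,i}(t)$ cannot grow fast enough to satisfy the firing condition in \eqref{Trigger} before a minimum amount of time has elapsed. First I would exploit the reset behavior of the zero-order hold: at the triggering instant $t_k^i$, the definition in \eqref{x_i_tilde} gives $\tilde{x}_i(t_k^i)=\hat{x}_i(t_k^i)$, so by \eqref{e2i} we have $e_{2,i}(t_k^i)=0$. For $t\in[t_k^i,t_{k+1}^i)$ the signal $\tilde{x}_i(t)$ is held constant, hence $\dot{e}_{2,i}(t)=-\dot{\hat{x}}_i(t)$ almost everywhere.

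Next I would bound $\|\dot{\hat{x}}_i(t)\|$ using the observer expression in \eqref{Observer i}, namely
\begin{equation*}
\|\dot{\hat{x}}_i(t)\| \leq \|\widehat{W}_i(t)\|\,\|\sigma(\widehat{\Phi}_i(\hat{x}_i(t)))\| + \|K_1\|\,\|z_i(t)\| + \|K_1 C_i^{\top}\|\,\|e_{3,i}(t)\|.
\end{equation*}
Each term on the right is bounded: $\widehat{W}_i(t)\in\mathcal{L}_\infty$ by the projection operator, $\sigma(\cdot)$ is bounded by design, and $z_i(t), e_{3,i}(t)\in\mathcal{L}_\infty$ by the boundedness claims established at the end of the proof of Theorem \ref{Theorem 1}. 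Consequently a finite constant $e_{2,i}^{\max}>0$ as defined in the statement exists, and $\|\dot{e}_{2,i}(t)\|\leq e_{2,i}^{\max}$. Integrating from the trigger instant with the initial condition $e_{2,i}(t_k^i)=0$ yields
\begin{equation*}
\|e_{2,i}(t)\| \leq e_{2,i}^{\max}(t-t_k^i), \qquad t\in[t_k^i,t_{k+1}^i).
\end{equation*}

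Finally I would invoke the trigger rule \eqref{Trigger}. Because $\phi_2\|z_i(t)\|^2\geq 0$, a necessary condition for the inequality $\phi_1\|e_{2,i}(t)\|^2 \geq \phi_2\|z_i(t)\|^2+\epsilon/N$ to be satisfied is $\phi_1\|e_{2,i}(t)\|^2\geq \epsilon/N$. Substituting the bound on $\|e_{2,i}(t)\|$ and solving for $t-t_k^i$ gives $t_{k+1}^i-t_k^i\geq \frac{1}{e_{2,i}^{\max}}\sqrt{\epsilon/(N\phi_1)}$, which is the claim. Since the right-hand side is independent of $k$, the bound is uniform, and the strict positivity of $\epsilon$ and $\phi_1$ rules out Zeno behavior.

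The main obstacle is the rigorous justification that $e_{2,i}^{\max}$ is a genuine uniform constant rather than a time-varying bound that could blow up; this rests on recycling the cascaded $\mathcal{L}_\infty$ chain already established in Theorem \ref{Theorem 1}, so no new stability argument is required. A minor subtlety is handling the countably many discontinuities of $\widehat{\Phi}_i$ at the offline retraining times $T_p^i$; since these are isolated, the bound on $\|\dot{e}_{2,i}(t)\|$ continues to hold almost everywhere, and the integration step is unaffected.
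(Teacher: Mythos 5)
Your proof is correct and follows essentially the same route as the paper: establish $e_{2,i}(t_k^i)=0$, bound $\Vert\dot{e}_{2,i}(t)\Vert$ via the observer dynamics and the $\mathcal{L}_\infty$ chain from Theorem \ref{Theorem 1}, integrate to get the linear growth bound, and invoke the trigger condition (with $\phi_2\Vert z_i(t)\Vert^2\geq 0$) to extract the minimum inter-event time. Your explicit handling of the necessary-condition step and of the discontinuities at the retraining times $T_p^i$ is slightly more detailed than the paper's, but the argument is the same.
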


\begin{proof}
Let $t\geq t_{k}^{i}\geq0$ and $i\in\mathcal{V}$. Substituting \eqref{Observer i} into the time derivative of \eqref{e2i} yields
\begin{equation*}
    \dot{e}_{2,i}(t) \overset{a.e.}{=} -\widehat{W}_{i}^{\top}(t)\sigma(\widehat{\Phi}_{i}(\hat{x}_{i}(t)))-K_{1}\left(z_{i}(t)-C_{i}^{\top}e_{3,i}(t)\right),
\end{equation*}
where $\Vert \dot{e}_{2,i}(t)\Vert \overset{a.e.}{\leq} \Vert \widehat{W}_{i}(t)\Vert \Vert \sigma(\widehat{\Phi}_{i}(\hat{x}_{i}(t)))\Vert +\Vert K_{1}\Vert \Vert z_{i}(t)\Vert +\Vert K_{1}C_{i}^{\top}\Vert \Vert e_{3,i}(t)\Vert $. Recall that $z_{i}(t)\in\mathcal{L}_{\infty}$ and $e_{3,i}(t)\in\mathcal{L}_{\infty}$ from the proof of Theorem \ref{Theorem 1}. Moreover, $\widehat{W}_{i}(t)$ and $\sigma(\widehat{\Phi}_{i}(\hat{x}_{i}(t)))$ are bounded by construction. Therefore, there exists $e_{2,i}^{\max}\in\mathbb{R}_{>0}$ such that $\Vert \widehat{W}_{i}(t)\Vert \Vert \sigma(\widehat{\Phi}_{i}(\hat{x}_{i}(t)))\Vert +\Vert K_{1}\Vert \Vert z_{i}(t)\Vert +\Vert K_{1}C_{i}^{\top}\Vert \Vert e_{3,i}(t)\Vert \leq e_{2,i}^{\max}$. Observe that 
\begin{equation*}
    \frac{d}{dt}\Vert e_{2,i}(t)\Vert =\frac{e_{2,i}^{\top}(t)\dot{e}_{2,i}(t)}{\Vert e_{2,i}(t)\Vert }\overset{a.e.}{\leq}\Vert \dot{e}_{2,i}(t)\Vert.
\end{equation*}
Therefore, $\frac{d}{dt}\Vert e_{2,i}(t)\Vert \overset{a.e.}{\leq}e_{2,i}^{\max}$. Recalling that $\Vert e_{2,i}(t_{k}^{i})\Vert =0$ and integrating $\frac{d}{dt}\Vert e_{2,i}(t)\Vert \overset{a.e.}{\leq}e_{2,i}^{\max}$ over $[t_{k}^{i},\infty)$, it follows that $\Vert e_{2,i}(t)\Vert \leq e_{2,i}^{\max}(t-t_{k}^{i})$ for all $t\in[t_{k}^{i},\infty)$. Using $\Vert e_{2,i}(t)\Vert \leq e_{2,i}^{\max}(t-t_{k}^{i})$ and the triggering condition in \eqref{Trigger}, the inequality in \eqref{Zeno Bound} follows.
\end{proof}

\section{Simulation Results} \label{Simulation Section}
To examine the performance of the developed estimation strategy, two numerical simulations are provided for a Van der Pol oscillator, where the model in \eqref{plant dynamics} is used with
\begin{equation}
    f(x_{0}(t)) = 
    \begin{bmatrix}
        \mu (x(t)-x^{3}(t)/3-y(t))\\
        x(t)/\mu\\
        -\mu z(t)
    \end{bmatrix},
\label{VanderPol}
\end{equation}
$x_{0}(t)\triangleq [x(t),y(t),z(t)]^{\top}\in\mathbb{R}^{3}$, and $\mu=0.3$. The disturbance affecting the system is given by 
\begin{equation*}
    d(t) = \left[
    0.5 \text{sin}(3t),\ 0.75 \text{cos}(t),\ \text{cos}(3.75t)\right]^{\top}.
\end{equation*}
The sensor network used to reconstruct the state of the system consists of three agents, where the output matrix of Agents 1, 2 and 3 are $C_{1}=[1 \ 0 \ 0]$, $C_{2}=[0 \ 1 \ 0]$, and $C_{3}=[0 \ 0 \ 1]$, respectively. Hence, each agent can measure one component of the system's state, and all agents cover the entire state of the system. Preliminary observations show that the bilinear matrix inequality in \eqref{BMI} has a solution whenever the combined output measurements of the agents cover the entire system state and the graph $\mathcal{G}$ is connected. The adjacency matrix of the sensor network is 
\begin{equation*}
    \mathcal{A} =
    \begin{bmatrix}
    0 & 1 & 1\\
    1 & 0 & 0\\
    1 & 0 & 0
    \end{bmatrix},
\end{equation*}
where Agent 1 can communication with Agents 2 and 3, but Agents 2 and 3 can only communicate with Agent 1. The first simulation investigates the performance of the observer when the weights and biases of all inner DNNs are held constant and only the outer weights of the DNNs are updated to ensure stability. The second simulation is identical to the first with the exception that the weights and biases of all inner DNNS are updated using input-output data. The simulation parameters, used in both simulations, are $N = 3$, $\kappa = 1$, $\epsilon = 3\times 10^{3}$, $\Gamma = 3 I_{3}$, $\rho = 2$, $\delta = 0.3$, and $k_{2} = 10$. Therefore, $k_{1} = 23.3$, where the CVX MATLAB toolbox in \cite{cvx} and \cite{gb08} was used to compute a $K_{1}$ that satisfies \eqref{BMI} as \update{$K_1 = \text{diag}(134.86,263.23,263.23)$}. Both simulations were $40$ time units long and used an integration time-step of $10^{-3}$ time units. The initial condition of the system and Agents $1$--$3$ are $x_{0} = [5\ 7\ 8]^{\top}$ and $x_{i} = [0\ 0\ 0]^{\top}$ for $i\in[3]$, respectively. 

Each agent used a $5$-layer DNN to approximate the system dynamics. Layers $1$--$3$ each consist of $12$ nodes, Layer $4$ consists of $5$ nodes, and Layer $5$ consists of $3$ nodes. Each layer is affine, i.e., each activation function is scaled by a weight and shifted by a bias, and all nodes use the tangent-sigmoid activation function. Layers $1$--$4$ used weights and biases, while Layer $5$ only used weights (biases set equal to zero). Given the model in \eqref{VanderPol}, the dimension of the input and output of each agent's DNN is $3$. As the outer weights of agent $i$'s DNN are updated online using \eqref{What_i Dot}, the user is free to specify an offline update procedure for the weights and biases of the inner DNN. This decoupling in training structure leads to a multiple timescale adaptation technique. Additional information about such methods can also be found in \cite{joshi2020design} and \cite{Sun.Greene.eatoappear}.

Input-output data and the Levenberg-Marquardt algorithm are used by the Deep Learning MATLAB toolbox in \cite{MATLAB_DLTB} to train the weights and biases of the inner DNN for each agent. Ideally, each agent would utilize $x_{0}(t)$ as input data and $\dot{x}_{0}(t)-d(t)$ as output data to train the inner DNN approximation of $f$. However, each agent does not have access to $x_{0}(t)$, $\dot{x}_{0}(t)$, and $d(t)$. Therefore, the input-output data used to train Layers $1$--$4$ of the DNN for agent $i$ are $\hat{x}_{i}(t)$ and $\dot{\hat{x}}_{i}(t)$. Given a continuously differentiable function, such as the DNN of agent $i$ with fixed weights and biases, and labeled input-output data, the Levenberg-Marquardt algorithm identifies a local minimizer for the nonlinear least squares problem formed from the given model and data. The algorithm is a combination of gradient descent and the Gauss-Newton method, which leads to efficient computation \cite{lourakis2005brief}. Because the Levenberg-Marquardt algorithm relies on labeled input-output data, i.e., each input corresponds to a known particular output, the algorithm can be considered a supervised learning strategy.

The learning rate, i.e., the step size used to train the inner DNN weights and biases at each iteration, was lower bounded by $10^{-3}$. The loss function used for training was the mean squared error (MSE) between the estimated output generated by a known input and the corresponding known output. Each training iteration was set to last until the MSE was less than $10^{-2}$ or a maximum of $25$ training epochs had elapsed. For each training iteration, $70\%$ of the data was used for training, $15\%$ was used for validation, and $15\%$ was used for testing. The weights and biases of Layers $1$--$4$ for all DNNs were initialized as $0.1$. The weights of Layer $5$ for the DNN of each agent were identically initialized as
\begin{equation*}
    \widehat{W}_{i}(0) = 
    \begin{bmatrix}
    2.51  &  4.59  &  3.40\\
   -2.44  &  0.47  & -2.45\\
    0.05  & -3.61  &  3.14\\
    1.99  & -3.50  & -2.56\\
    3.90  & -2.42  &  4.29
    \end{bmatrix}.
\end{equation*}

For the second simulation, data collection and training were performed by each agent during the first half of the simulation, and the resulting DNN model was implemented during the second half of the simulation. Specifically, each agent collected data during $t\in[10,16]$ and trained their inner DNN during $t\in[16,20]$. Each agent then implemented their updated inner DNN model for $t\geq 20$. While data collection can begin at any time, we observed improved state reconstruction after the transient response elapsed.

The results of the simulations are illustrated in Figures \ref{Figure 1}--\ref{Figure 3} and Table \ref{Table 1}. Note that although the agents performed data collection, training, and implementation in a synchronous manner for simplicity, asynchronous learning cycles can also be employed. Figure \ref{Figure 1} shows the norm of the state estimation error of all agents, where the weights and biases of Layers $1$--$4$ are held constant and only the weights of Layer $5$ are updated. Similarly, Figure \ref{Figure 2} shows the norm of the state estimation error of all agents, where the weights and biases of Layers $1$--$4$ and the weights of Layer $5$ are updated. The black dashed line, black solid line, and red dashed line in Figure \ref{Figure 2} denote the start of the data collection process, the end of the data collection process and beginning of the training process, and the end of the training process and DNN application time, respectively. Figure \ref{Figure 3} shows the event-times of each agent for $t\in[19.75,20.25]$ for the second simulation. The average difference between consecutive broadcast times for Agents $1$, $2$, and $3$ were $0.0047$, $0.0089$, and $0.0095$ time units, respectively, where no significant difference between broadcast times for the two simulations was observed.

\begin{figure}
\begin{center} 
\includegraphics[scale=1]{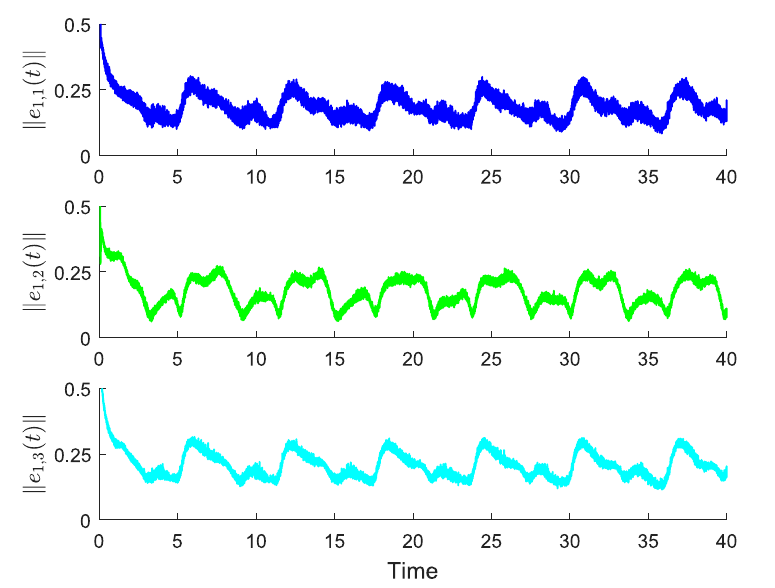} 
\caption {This figure illustrates the norm of the state estimation error, i.e., $\Vert e_{1,i}(t) \Vert$,  for each agent. The inner layer weights and biases of the DNNs are held constant for the entire simulation and only the outer layer weights are updated according to \eqref{What_i Dot}.}
\vspace{-10pt}
\label{Figure 1}
\end{center}
\end{figure}

\begin{figure}
\begin{center} 
\includegraphics[scale=1]{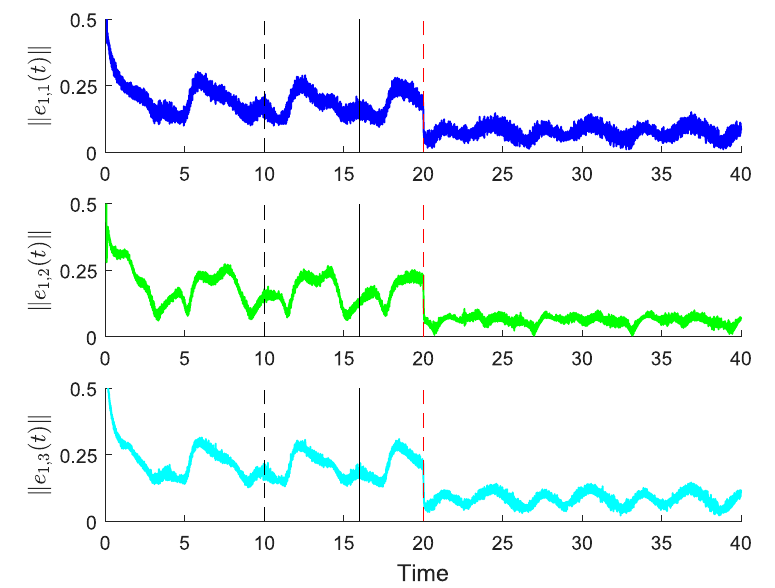} 
\caption {This figure portrays the norm of the state estimation error, i.e., $\Vert e_{1,i}(t) \Vert$, for each agent. When $t=10$, as denoted by the black dashed line, each agent begins collecting input-output data. Each agent stops their data collection process when $t=16$, which is denoted by the solid black line. For $t\in[16,20]$, each agent uses their collected data to train their inner DNN weights and biases, where training stops and the new DNN of each agent is switched in at $t=20$ according to \eqref{Observer i}. The red dashed line corresponds to the DNN implementation time of each agent.}
\vspace{-20pt}
\label{Figure 2}
\end{center}
\end{figure}

Table \ref{Table 1} lists the root-mean-square error (RMSE) of $\Vert e_{1,i}(t) \Vert$ for each agent during $t\geq 20$ for both simulations. Table \ref{Table 1} also lists the percent change between the two RMSE statistics, where the RMSE without DNN learning and with DNN learning define the initial and final values, respectively, used in the percent change computation. Figure \ref{Figure 2} and Table \ref{Table 1} demonstrate that the multiple timescale learning strategy can provide significant improvements in state reconstruction. A single training iteration of the inner DNNs reduced the RMSE of the state estimation error by approximately $60\%$ for each agent. While additional training iterations could have been performed, a single iteration is simulated since additional training cycles produced minimal improvements in state reconstruction for this case. 
\begin{figure}
\begin{center} 
\includegraphics[scale=1]{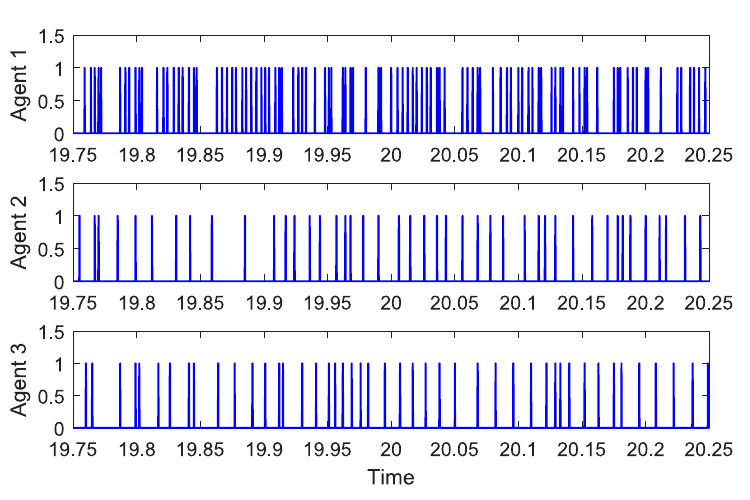} 
\caption {The event-times for each agent are depicted above for $t\in[19.75,20.25]$. A 0, or white space, denotes no communication, and a 1, or blue line, denotes a communication event. A $0.5$ time unit window of the simulation is shown, rather than entire simulation, to more clearly depict the intermittency and asynchrony in communication.}
\vspace{-20pt}
\label{Figure 3}
\end{center}
\end{figure}

\begin{table}[h!]
\centering
\caption{State Estimation RMSE With \& Without DNN Learning}
\vspace{-0.08 in}
\begin{tabular}{>{\centering\arraybackslash}p{0.25 in} >{\centering\arraybackslash}p{0.8 in} >{\centering\arraybackslash}p{0.8 in} >{\centering\arraybackslash}p{0.8 in}}
\hline
Agent   & RMSE Without DNN Learning   &  RMSE With DNN Learning   & Percent Change
\\
\hline 
   1   & 0.1827        & 0.0777          &   -57.47\%
\\ 
   2   & 0.1775        & 0.0634          &   -64.28\%        
\\ 
   3   & 0.2070        & 0.0876          &   -57.68\%       
\\
\hline
\end{tabular}
\\[0.0125 in]
\begin{flushleft} 
The root-mean-square of $\Vert e_{1,i}(t) \Vert$ for each agent is presented for both simulations, which was computed for $t\in[20,40]$.
\end{flushleft} 
\label{Table 1}
\end{table}

\section{Conclusion} \label{Concluding Section}
An adaptive event-triggered distributed state observer for a sensor network is developed, which is capable of reconstructing the state of an uncertain nonlinear system while being robust to a bounded disturbance. A DNN is used by each agent in the sensor network to approximate the uncertain nonlinear system dynamics from input-output data. The inner layer weights and biases are trained using the Levenberg-Marquardt algorithm in an offline manner, while the outer layer weights are updated using an analysis-based update law and real-time feedback. The result is a multiple timescale learning strategy. A nonsmooth Lyapunov stability analysis is provided that indicates the system's state can be uniformly reconstructed to within an ultimate bound. 

As seen in Figure \ref{Figure 1}, the observer drives the state reconstruction error for each agent below unity, which implies that the observer is capable of good performance without DNNs and that the observer parameters can be relaxed to decrease the frequency of communication. With respect to Figures \ref{Figure 1} and \ref{Figure 2}, it is evident that the use of DNNs allowed each agent to reduce their state reconstruction error by a substantial margin (approximately $60\%$ reduction). Moreover, the improvement in performance only required a single learning cycle, where additional learning cycles can be executed as necessary. A key observation noted during the simulation study is that training DNNs with data collected during the transient response leads to smaller performance improvements when compared to training using data collected during steady state. Furthermore, higher quality data, such as that generated from frequent communication, leads to improved performance. Future pursuits that could build on this result include using multiple DNNs to learn the system dynamics and disturbance separately, developing distributed consensus algorithms that share local weights and biases from the DNNs, and exploring the trade-off between triggering and state reconstruction using DNNs.

\bibliographystyle{IEEEtran}
\bibliography{Sources}

\end{document}